\documentclass[conference]{IEEEtran}
\usepackage[top=0.75in,bottom=1.05in,left=0.65in,right=0.63in]{geometry}

\usepackage[T1]{fontenc}
\usepackage{amsmath,amssymb,amsfonts,mathrsfs}
\usepackage{amsthm}
\usepackage{bm}
\usepackage{cite}
\usepackage{graphicx}

\theoremstyle{plain}
\newtheorem{theorem}{Theorem}
\newtheorem{proposition}{Proposition}
\newtheorem{lemma}{Lemma}

\newtheorem{example}{Example}

\theoremstyle{definition}
\newtheorem{definition}{Definition}

\theoremstyle{remark}

\newcommand{\F}{\mathbb{F}}
\newcommand{\p}{\mathbf{p}}
\newcommand{\Enc}{\mathrm{Enc}}
\newcommand{\wt}{\mathrm{wt}}
\newcommand{\dH}{d_H}

\title{Function-Correcting Codes with Optimal Data Protection for Hamming Code Membership}

\author{
\IEEEauthorblockN{Swaraj Sharma Durgi\IEEEauthorrefmark{1}, Anjana A. Mahesh\IEEEauthorrefmark{1}, Anupriya Kumari\IEEEauthorrefmark{2}, Rajlaxmi Pandey\IEEEauthorrefmark{3}, B. Sundar Rajan\IEEEauthorrefmark{3}}
\IEEEauthorblockA{\IEEEauthorrefmark{1}Department of Electrical Engineering, IIT Hyderabad, India}
\IEEEauthorblockA{\IEEEauthorrefmark{2}Department of Electronics and Communication Engineering, NIT Patna, India}
\IEEEauthorblockA{\IEEEauthorrefmark{3}Department of Electrical Communication Engineering, IISc Bengaluru, India}
\IEEEauthorrefmark{1}{\{ee24btech11018@iith.ac.in, anjana.am@ee.iith.ac.in\}},\IEEEauthorrefmark{2}{anupriyak.ug22.ec@nitp.ac.in}, \IEEEauthorrefmark{3}{\{rajlaxmip, bsrajan\}@iisc.ac.in}
}

\begin{document}
\maketitle

\begin{abstract}
This paper investigates single-error-correcting function-correcting codes (SEFCCs) for the Hamming code membership function (HCMF), which indicates whether a vector in \(\mathbb{F}_2^7\) belongs to the \([7,4,3]\)-Hamming code. Necessary and sufficient conditions for valid parity assignments are established in terms of distance constraints between codewords and their nearest non-codewords. It is shown that the Hamming-distance-\(3\) relations among Hamming codewords induce a bipartite graph, a fundamental geometric property that is exploited to develop a systematic SEFCC construction. By deriving a tight upper bound on the sum of pairwise distances, we prove that the proposed bipartite construction uniquely achieves the maximum sum-distance, the largest possible minimum distance of $2$, and the minimum number of distance-2 codeword pairs. Consequently, for the HCMF SEFCC problem, sum-distance maximization is not merely heuristic—it exactly enforces the optimal distance-spectrum properties relevant to error probability. Simulation results over AWGN channels with soft-decision decoding confirm that the resulting max-sum SEFCCs provide significantly improved data protection and Bit Error Rate (BER) performance compared to arbitrary valid assignments.

\end{abstract}

\begin{IEEEkeywords}
Function-correcting codes, Hamming codes, membership function, data protection, soft-decision decoding.
\end{IEEEkeywords}

\section{Introduction}

In many modern communication and distributed computing systems, the receiver's objective is not the full reconstruction of the transmitted message, but rather the computation of a specific function of that message. Classical error-correcting codes (ECCs), designed to protect the entire message, often provide unnecessary redundancy for such tasks. This observation led to the development of function-correcting codes (FCCs), introduced by Lenz \textit{et al.} \cite{lenz2023fcc}, which guarantee that a function $f(\mathbf{u})$ can be correctly computed even if the codeword representing the message $\mathbf{u}$ is corrupted by up to $t$ errors.

Since the formalization of FCCs, significant research has focused on characterizing the fundamental limits of redundancy and constructing codes for specific channel models. The optimal redundancy for various function classes was explored in \cite{ge2025optimal} and \cite{ly2025finitefields}. Extensions to specialized read channels, such as symbol-pair and $b$-symbol channels, were investigated in \cite{xia2024symbolpair, singh2025bsymbol, sampath2025note}. Furthermore, structural properties of FCCs, including homogeneous distance requirements and locally bounded function constraints, have been recently studied in \cite{liu2025homogeneous, rajput2025locally}. Recent work by Rajput \textit{et al.} has also introduced the concepts of specialized partition codes \cite{rajput2025fccPC} and function-correcting codes with data protection \cite{rajput2025fccdp}.

Most existing FCC literature treats the function $f$ as a general mapping or focuses on functions with specific local properties. There is a lack of study concerning functions defined by algebraic membership in a subspace of a vector space. In this paper, we address this gap by studying FCCs for the Hamming-code membership function (HCMF), denoted $f_H$. The function $f_H:\F_2^7 \to \F_2$ indicates whether a given vector belongs to the $[7,4,3]$-Hamming code, viewed as a $4$-dimensional linear subspace of $\F_2^7$.  

As discussed in \cite{rajput2025fccdp}, multiple FCCs may exist with the same redundancy and the same function-correcting capability, yet they can offer varying levels of protection for the underlying data. This implies that error performance with respect to data bits is not solely determined by the function-correction requirement. Our work is motivated by a similar premise as in  \cite{rajput2025fccdp} but differs significantly in its objective and scope. While \cite{rajput2025fccdp} provides a general framework for adding a fixed level of data protection to an FCC, in this paper, we consider optimal single error-correcting FCCs (SEFCCs) for the HCMF, analyze their distance properties, and identify a class of SEFCCs that has superior data bit error performance.

The contributions of this paper are summarized as follows:
\begin{itemize}
    \item Necessary and sufficient conditions are established for a parity assignment to constitute a valid SEFCC for the HCMF.
    \item The set of Hamming codewords at minimum distance is shown to induce a bipartite graph, and this bipartite structure is identified as the key geometric property required to satisfy FCC validity constraints.
    \item A systematic SEFCC construction is developed using this bipartite framework, achieving both the maximum theoretical total pairwise distance and the largest possible minimum distance.
    \item We show that a balanced parity split within each partite set minimizes the number of low-distance codeword pairs via a quadratic optimization of the distance spectrum.
    \item We establish that for the HCMF SEFCC problem, sum-distance maximization is not merely heuristic—it exactly enforces the optimal distance-spectrum properties relevant to error probability.
    \item AWGN channel simulations demonstrate that the proposed construction yields a significant Bit Error Rate improvement over arbitrary SEFCCs.
\end{itemize}


The rest of the paper is organized as follows. Section \ref{sec:Prelims} provides the necessary preliminaries on function-correcting codes and the $[7,4,3]$-Hamming code. Section \ref{sec:Setting} defines the problem setup and the Hamming code membership function. Section \ref{sec:Results} presents our main results: we establish the necessary and sufficient conditions for valid parity assignment, prove the bipartite nature of the distance-3 codeword graph, and propose the optimal max-sum construction while deriving its distance properties. Section \ref{sec:Sim} presents the performance evaluation via AWGN channel simulations, and Section \ref{sec:Conc} concludes the paper.

\textit{Notations:} For any positive integer \( n \), the shorthand \( [n] \) refers to the set \( \{1, 2, \dots, n\} \). For a real number $x$, the notation $[x]^+$ is used to denote $\max(0,x)$. The finite field with \( 2 \) elements is denoted by \( \mathbb{F}_2 \). Vectors of length \( n \) over \( \mathbb{F}_2 \) form the vector space \( \mathbb{F}_2^n \). For a binary vector $\mathbf{x}$, its bitwise complement is denoted as $\overline{\mathbf{x}}$. A linear block code with length \( n \), dimension \( k \), and minimum Hamming distance \( d \) over \( \mathbb{F}_2 \) is denoted as a \( [n, k, d] \) code. The Hamming weight of a vector \( \mathbf{x} \in \mathbb{F}_2^n \), denoted \( \wt(\mathbf{x}) \), is the number of positions in \( \mathbf{x} \) that are nonzero. The Hamming distance between two vectors or \( \mathbf{x}, \ \mathbf{y} \in \mathbb{F}_2^n \), denoted as \( \dH(\mathbf{x},\mathbf{y})\) is the number of positions where the bits in $\mathbf{x}$ and $\mathbf{y}$ differ. 
\section{Preliminaries}
\label{sec:Prelims}

\subsection{Function-Correcting Codes}
\begin{definition}[(Systematic) Function-Correcting Code]\cite{lenz2023fcc}
Let $f:\F_2^k\to \mathscr{I}$ be a function.
A systematic encoder $\Enc:\F_2^k\to \F_2^{k+r}$ of the form $\Enc(\mathbf{u})=(\mathbf{u},\p(\mathbf{u}))$ is an $(f,t)$-FCC if for all $\mathbf{u}_1,\mathbf{u}_2\in \F_2^k$ with $f(\mathbf{u}_1)\neq f(\mathbf{u}_2)$,
\begin{equation}
\dH\big(\Enc(\mathbf{u}_1),\Enc(\mathbf{u}_2)\big)\ge 2t+1.
\label{eq:fccdef}
\end{equation}
\end{definition}

\begin{definition}[Optimal redundancy]\cite{lenz2023fcc}
Given $f$ and $t$, the optimal redundancy $r_f(k,t)$ is the minimum $r$ such that there exists an encoder $\Enc:\F_2^k\to \F_2^{k+r}$ that forms an $(f,t)$-FCC.
\end{definition}


\subsection{$[7,4,3]$-Hamming Code}
\label{sec:HC}
The $[7,4,3]$-Hamming code is a binary linear block code that encodes $4$ information bits into $7$ coded bits and has minimum Hamming distance $3$. Being a perfect code, meaning that every vector in $\F_2^7$ lies within Hamming distance one of a unique codeword, it partitions the space $\F_2^7$ into $16$ Hamming spheres, denoted as ${S_i}, i \in [16]$, with one of the 16 codewords, denoted as $\mathbf{c}_i$, at the centre of each sphere and $7$ unique non-codewords at a Hamming distance of $1$ from the codeword at the centre. We denote the non-codewords that are nearest to the codeword $\mathbf{c}_i$ as $\mathbf{u}_{i,j}, \ j \in [7]$. Thus, the vector space $\F_2^7$ is partitioned into $ \mathscr{C}_H= \{\mathbf{c}_i\}_{i=1}^{16}$ and $\F_2^{7}\setminus\mathscr{C}_H = \{\mathbf{u}_{i,j}: i \in [16], \ j \in [7]\}$. 

Owing to its algebraic structure, uniform distance properties, and optimality for single-error correction, the $[7,4,3]$-Hamming code serves as a canonical example in coding theory and provides a natural testbed for studying FCCs based on code membership functions~\cite{macwilliams1977theory}.

\section{Problem Setup}
\label{sec:Setting}
For the $[7,4,3]$-Hamming code defined in section \ref{sec:HC}, let $\mathscr{C}_H$ denote the set of 16 codewords. We study function correcting codes for the \emph{Hamming code membership function} (HCMF) $f_H:\F_2^7\to \F_2$ defined as 
\[
f_H(\mathbf{v})=
\begin{cases}
1, & \mathbf{v}\in \mathscr{C}_H,\\
0, & \mathbf{v}\in \mathbb{F}_2^7 \setminus \mathscr{C}_H,
\end{cases}
\]
Thus, the domain $\F_2^7$ contains $16$ Hamming codewords (HCWs) for which $f_H$ evaluates to 1 and $112$ vectors that are not Hamming codewords (NHCWs) for which $f_H$ evaluates to 0. 

For all boolean functions, it was shown in \cite{lenz2023fcc} that optimal redundancy for a $t$-error correcting FCC is $r_f(k,t) = 2t$. Hence, an optimal $t$-error correcting systematic FCC $\mathcal{C}$ for the Hamming code membership function $f_H$ is a mapping from $\mathbb{F}_2^7 \rightarrow \mathbb{F}_2^{7+2t}$ with $\Enc(\mathbf{u}_i) = [\mathbf{u}_i, \p_\mathcal{C}(\mathbf{u}_i)]$, where $\p_\mathcal{C}(\mathbf{u}_i) \in \mathbb{F}_2^{2t}$ is the $2t$-length parity vector assigned to the input vector $\mathbf{u}_i \in \mathbb{F}_2^7$. In this paper, we consider single-error-correcting function correcting codes (SEFCCs) for the function $f_H$. 
For single-error correction ($t=1$), the FCC distance constraint in \eqref{eq:fccdef} reduces to 
\begin{equation}
\dH\big(\Enc(\mathbf{c}_i),\Enc(\mathbf{u}_j)\big)\ge 3, \forall \mathbf{c}_i \in \mathscr{C}_H \text{ and } \mathbf{u}_j \in \mathbb{F}_2^7 \setminus \mathscr{C}_H.
\label{eq:cw_ncw_req}
\end{equation}


\begin{definition}[Hamming Distance matrix]
For an FCC $\mathcal{C}$ for a function $f:\F_2^k\to \mathscr{I}$, we define the Hamming distance matrix (HDM) of the FCC $\mathcal{C}$ as the $2^k \times 2^k$ matrix $D_\mathcal{C}$ with the $(i,j)^{\text{th}}$ entry defined as $$D_\mathcal{C}(i,j)= \dH(\mathbf{u}_i,\mathbf{u}_j)+ \dH(\p_\mathcal{C}(\mathbf{u}_i), \p_\mathcal{C}(\mathbf{u}_j)),$$ for $i,j \in [2^k]$, and $\mathbf{u}_i$ and $\mathbf{u}_j$ are the $i^{\text{th}}$  and $j^{\text{th}}$ elements in the natural binary counting order of the vectors in $\mathbb{F}_2^k$.  \end{definition}

\begin{definition}[Sum-Distance]
For an FCC $\mathcal{C}$ for a function $f:\F_2^k\to \mathscr{I}$ with a Hamming distance matrix $D_\mathcal{C}$, the sum-distance of $\mathcal{C}$, denoted $\Sigma_{d}(\mathcal{C})$, is defined as the sum of pairwise Hamming distances of the encodings in $\mathcal{C}$, i.e., $\Sigma_{d}(\mathcal{C}) = \sum_{i,j}D_\mathcal{C}(i,j)$. 
    
\end{definition}
\section{Main Results}
\label{sec:Results}

This section starts with a theorem on FCC construction for any binary-valued function that minimizes Function Error Rate (FER), then studies optimal SEFCCs for the HCMF \( f_H : \mathbb{F}_2^7 \to \mathbb{F}_2 \). We characterize distance properties that govern data bit error performance and construct a family of optimal SEFCCs achieving these properties.

\begin{theorem}[Optimal FER FCC]
 \label{thm:optimFER}
    For any binary-valued function $f:\F_2^k \rightarrow \F_2$, define $f^{-1}(0) \triangleq  \{\mathbf{u} \in \F_2^k \mid f(\mathbf{u})=0\}$ and $f^{-1}(1) \triangleq \{\mathbf{u} \in \F_2^k \mid f(\mathbf{u})=1\}$. An optimal FCC $\mathcal{C}$ that assigns the same parity $\p \in \F_2^2$ to all vectors in $f^{-1}(0)$ and the complement $\overline{\p}$ for all the vectors in $f^{-1}(1)$ achieves the optimal FER performance. 
\end{theorem}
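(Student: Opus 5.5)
We first pin down the claim. The FCC is $\Enc(\mathbf{u})=(\mathbf{u},\p(\mathbf{u}))$ with $t=1$, so $r=2$. "Optimal FER" is taken to mean: among all valid SEFCCs with redundancy $2$ for $f$, the proposed assignment minimizes the probability that the receiver's estimate of $f(\mathbf{u})$ is wrong. We will argue this under minimum-distance (or ML soft-decision) decoding over a memoryless symmetric channel, e.g. the BSC or the AWGN channel of Section~\ref{sec:Sim}. The strategy is to lower bound the FER of any valid code by quantities that depend only on the cross-class distances $\dH(\Enc(\mathbf{u}_1),\Enc(\mathbf{u}_2))$ with $f(\mathbf{u}_1)\neq f(\mathbf{u}_2)$. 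We then show that the complementary assignment maximizes every one of these distances simultaneously.

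\textbf{Validity.} For $f(\mathbf{u}_1)=0$ and $f(\mathbf{u}_2)=1$,
\[
\dH(\Enc(\mathbf{u}_1),\Enc(\mathbf{u}_2))=\dH(\mathbf{u}_1,\mathbf{u}_2)+\dH(\p,\overline{\p})=\dH(\mathbf{u}_1,\mathbf{u}_2)+2 .
\]
Since $\mathbf{u}_1\neq\mathbf{u}_2$, this is at least $3=2t+1$, so \eqref{eq:fccdef} holds.

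\textbf{Pointwise dominance of distances.} For any redundancy-$2$ code $\mathcal{C}'$, the parity contribution $\dH(\p_{\mathcal{C}'}(\mathbf{u}_1),\p_{\mathcal{C}'}(\mathbf{u}_2))\le 2$. Hence every cross-class distance of $\mathcal{C}'$ is at most the corresponding distance of $\mathcal{C}$. Equivalently, the two class images $\{\Enc(\mathbf{u}):f(\mathbf{u})=0\}$ and $\{\Enc(\mathbf{u}):f(\mathbf{u})=1\}$ are pushed as far apart as any two-bit parity allows, uniformly over all pairs.

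\textbf{Translating distances into FER.} Function decoding is a binary hypothesis test between the two class images. Its error probability is governed by the union of pairwise error events between cross-class codewords. Since $\mathcal{C}$ is obtained from $\mathcal{C}'$ by increasing each cross-class distance, every pairwise error probability $Q\bigl(\sqrt{d\cdot 2E_s/N_0}\bigr)$, or its BSC analogue, is minimized. So both the union bound and the dominant nearest-neighbour term of the FER are minimized by $\mathcal{C}$. A cleaner alternative is a direct argument: the $2$ parity bits of $\mathcal{C}$ form a repetition-type encoding of $f(\mathbf{u})$ itself. The pair of received parity symbols is then a sufficient statistic beyond the systematic part, and no other two-bit function of $\mathbf{u}$ can carry more information about $f(\mathbf{u})$ than two independent noisy copies of $f(\mathbf{u})$.

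\textbf{Main obstacle.} The hardest step is turning pointwise distance dominance into exact FER dominance, since error probability is not in general monotone in individual distances. We would handle it by a channel-degradation argument. For each $\mathbf{u}$, the received parity of $\mathcal{C}'$ depends on $\mathbf{u}$ through $\p_{\mathcal{C}'}(\mathbf{u})$. That of $\mathcal{C}$ depends on $\mathbf{u}$ only through $f(\mathbf{u})$, which is exactly the quantity to be estimated, with maximal separation $\p$ versus $\overline{\p}$. We would show that the optimal test on the systematic part plus the parity part of $\mathcal{C}$ is at least as good as that of $\mathcal{C}'$ by conditioning on the systematic observation. Given that observation, $\mathcal{C}$ faces a binary test between two fixed antipodal parity pairs. $\mathcal{C}'$ faces a test between two mixtures of parity pairs, and each such mixture is at most as separated as the antipodal one. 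If this conditional comparison proves too delicate, we would fall back on stating optimality with respect to the union-bound/distance-spectrum criterion, which the dominance step establishes rigorously.
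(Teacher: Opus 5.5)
Your core argument is the paper's proof. The paper uses only your pointwise-dominance step: every cross-class parity distance equals $2$, the most two parity bits allow, so all cross-class distances are maximized simultaneously. It then simply asserts that this gives optimal FER. Your validity check (which the paper omits) and your union-bound remark stay within that same argument.

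What you add is the recognition that distance dominance does not by itself imply FER dominance, together with the conditioning argument. That is a genuinely stronger route, and it closes as follows.

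\begin{itemize}
\item Given the systematic output $\mathbf{s}$, both codes induce the same class posteriors $\pi_0,\pi_1$. For a two-bit parity assignment $\mathcal{C}'$, the class-conditional laws of the parity output are mixtures $M_b=\sum_{\mathbf{q}}\alpha^{(b)}_{\mathbf{q}}W(\cdot\mid\mathbf{q})$.
\item Expand $\pi_0M_0$ and $\pi_1M_1$ over the product weights $\alpha^{(0)}_{\mathbf{q}}\alpha^{(1)}_{\mathbf{q}'}$ and use $\min(\sum_i a_i,\sum_i b_i)\ge\sum_i\min(a_i,b_i)$. Then the conditional Bayes error $\int\min(\pi_0M_0,\pi_1M_1)$ is at least a convex combination of the pairwise errors $e(\mathbf{q},\mathbf{q}')=\int\min\bigl(\pi_0W(\cdot\mid\mathbf{q}),\pi_1W(\cdot\mid\mathbf{q}')\bigr)$.
\item For a symmetric memoryless channel such as the BSC or BPSK-AWGN, $e$ depends only on $\dH(\mathbf{q},\mathbf{q}')$. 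It is smallest at distance $2$, because a distance-$1$ test is just a one-coordinate marginal of the distance-$2$ test.
\item $e(\p,\overline{\p})$ is exactly the conditional error of $\mathcal{C}$. Averaging over $\mathbf{s}$ gives $\mathrm{FER}(\mathcal{C}')\ge\mathrm{FER}(\mathcal{C})$ for every two-bit parity assignment $\mathcal{C}'$ and any prior on $\mathbf{u}$.
\end{itemize}

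This is an exact statement that the paper's distance argument does not deliver. Your ``cleaner alternative'' about two noisy copies of $f(\mathbf{u})$ is the same fact, and it needs this mixture argument rather than being self-evident.

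One correction to your setup: the argument proves optimality under the MAP decoder for $f$. It does not cover minimum-distance (codeword-level ML) decoding followed by evaluating $f$, so you should state the decoder that way. For a codeword-level decoder, neither your argument nor the paper's establishes exact optimality.
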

\begin{IEEEproof}
    Since function errors arise exclusively from pairs $(\mathbf{u},\mathbf{v})$ with $f(\mathbf{u})\neq f(\mathbf{v})$, which we refer to as cross-class codeword pairs, minimizing function error reduces to maximizing the distances between all cross-class pairs. The maximum possible parity contribution to the distance of any cross-class pair is $2$. Since the proposed parity assignment  enforces $\dH(\p_{\mathcal{C}}(\mathbf{u}),\p_{\mathcal{C}}(\mathbf{v}))=2$ for every pair $(\mathbf{u},\mathbf{v})$  with $f(\mathbf{u})\neq f(\mathbf{v})$, and hence maximizes the distance of all such pairs simultaneously, it achieves the optimal FER performance.
\end{IEEEproof}

\subsection{Parity Assignment and FCC Validity}
\begin{proposition}[Complementary Parity Rule]
\label{prop:CW_NCW CPR}
    An optimal SEFCC $\mathcal{C}: \F_2^7 \rightarrow \F_2^9$ for the HCMF $f_H$ assigns parities that are bitwise complements of each other to every Hamming codeword and its $7$ NHCW neighbors, i.e., $\p_\mathcal{C}(\mathbf{u}_{i,j}) = \overline{\p_\mathcal{C}(\mathbf{c}_i)}$, $\forall j \in [7]$, for each $i \in [16]$. 
\end{proposition}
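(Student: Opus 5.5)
We argue directly from the single-error FCC constraint \eqref{eq:cw_ncw_req}, using that an optimal SEFCC for the Boolean function $f_H$ has redundancy $r=2t=2$, so every parity lies in $\F_2^2$. Fix $i\in[16]$ and $j\in[7]$. Since $\mathscr{C}_H$ is perfect with spheres of radius one, $\mathbf{u}_{i,j}$ differs from $\mathbf{c}_i$ in exactly one coordinate, so $\dH(\mathbf{c}_i,\mathbf{u}_{i,j})=1$. By \eqref{eq:cw_ncw_req},
\[
\dH\big(\Enc(\mathbf{c}_i),\Enc(\mathbf{u}_{i,j})\big)=1+\dH\big(\p_\mathcal{C}(\mathbf{c}_i),\p_\mathcal{C}(\mathbf{u}_{i,j})\big)\ge 3 .
\]
Hence the two parities must be at distance at least $2$.

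In $\F_2^2$ the maximum Hamming distance is $2$, and it is attained only by a vector and its complement. So the inequality forces
\[
\dH\big(\p_\mathcal{C}(\mathbf{c}_i),\p_\mathcal{C}(\mathbf{u}_{i,j})\big)=2, \qquad \text{that is,} \qquad \p_\mathcal{C}(\mathbf{u}_{i,j})=\overline{\p_\mathcal{C}(\mathbf{c}_i)}.
\]
Since $i$ and $j$ were arbitrary, all $7$ NHCW neighbours of each $\mathbf{c}_i$ receive the common parity $\overline{\p_\mathcal{C}(\mathbf{c}_i)}$, which is the claim.

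There is no real obstacle here; the statement is a direct consequence of \eqref{eq:cw_ncw_req}. The only point needing care is justifying that "optimal" means $r=2$, which follows from the result of \cite{lenz2023fcc} that $r_f(k,t)=2t$ for Boolean functions. Note also that the rule is well defined: each NHCW lies in exactly one sphere $S_i$, so no vector is assigned two conflicting parities.
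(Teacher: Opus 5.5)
Your proof is correct and follows the same route as the paper: apply \eqref{eq:cw_ncw_req} to each pair $(\mathbf{c}_i,\mathbf{u}_{i,j})$ at message distance $1$, which forces parity distance $2$, and in $\F_2^2$ that means the two parities are complements. Your explicit justification that optimality means $r=2$ and your well-definedness remark (each NHCW lies in a unique sphere) are useful details that the paper leaves implicit.
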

\begin{IEEEproof}
    An SEFCC for the HCMF $f_H$ has to satisfy the distance criterion in \eqref{eq:cw_ncw_req} for every (HCW, NHCW) pair. In particular, the distance requirement has to be met for every Hamming CW $\mathbf{c}_i$ and the $7$ non-codewords in its $1$-neighbourhood, i.e., $\dH(\Enc(\mathbf{c}_i),\Enc(\mathbf{u}_{i,j})) \geq 3$, $\forall j \in [7]$, for each $i \in [16]$. For the (HCW, NHCW) pairs within a $1$-Hamming sphere, this condition effectively implies that the distance between their parities is $2$, i.e., $\dH(\p_\mathcal{C}(\mathbf{c}_i),\p_{\mathcal{C}}(\mathbf{u}_{i,j})) = 2$. Since each parity vector is of length $2$, this requirement can be achieved only if the parities are bitwise complements of each other. 
\end{IEEEproof}

Thus, the design of an optimal SEFCC for the HCMF $f_H$ reduces to assigning parity vectors for the $16$ vectors in $\mathscr{C}_H$. After enforcing the NCW parity rule, violations can only occur via CW--NCW pairs that are at distance $2$. The following condition exactly characterizes validity for SEFCCs.

\begin{proposition}[Codeword-Neighbour Parity Rule]
\label{prop:CW_NPR}
A parity assignment for the $16$ Hamming codewords $\{\mathbf{c}_i\}_{i=1}^{16}$ results in a valid optimal SEFCC $\mathcal{C}: \F_2^7 \rightarrow \F_2^9$ for the HCMF $f_H$ if and only if no pair of Hamming codewords at distance $3$ is assigned complementary parities. 
\end{proposition}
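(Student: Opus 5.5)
Throughout, I take the parities of all $112$ NHCWs to be fixed by the Complementary Parity Rule (Proposition~\ref{prop:CW_NCW CPR}), i.e., $\p_\mathcal{C}(\mathbf{u}_{i,j})=\overline{\p_\mathcal{C}(\mathbf{c}_i)}$. Validity then means that \eqref{eq:cw_ncw_req} holds for every pair $(\mathbf{c}_k,\mathbf{u}_{i,j})$. The plan is to split these pairs according to the Hamming distance $d=\dH(\mathbf{c}_k,\mathbf{u}_{i,j})$ of their first seven coordinates. Since the parity part contributes at most $2$, only pairs with $d\le 2$ can violate the condition. The case $d\ge 3$ is automatically fine. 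The case $d=1$ forces $i=k$, because the code is perfect, so each NHCW lies in a unique sphere; for $i=k$ the complementary parities give total distance $3$.

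The key step is to characterize the pairs with $d=2$. Take $\mathbf{u}_{i,j}=\mathbf{c}_i+\mathbf{e}_j$. If $\dH(\mathbf{c}_k,\mathbf{u}_{i,j})=2$ with $k\neq i$, then $\dH(\mathbf{c}_k,\mathbf{c}_i)\le 3$ by the triangle inequality. Since the minimum distance is $3$, this forces $\dH(\mathbf{c}_k,\mathbf{c}_i)=3$ with $j\in\mathrm{supp}(\mathbf{c}_k+\mathbf{c}_i)$. Conversely, if $\dH(\mathbf{c}_k,\mathbf{c}_i)=3$, flipping any coordinate $j$ in the weight-$3$ difference gives an NHCW $\mathbf{u}_{i,j}$ at distance $2$ from $\mathbf{c}_k$. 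So the pairs at distance $2$ correspond exactly to ordered pairs of HCWs at distance $3$, each together with a choice of one of the three differing coordinates.

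Now compute the total distance for such a pair: $2+\dH(\p(\mathbf{c}_k),\overline{\p(\mathbf{c}_i)})=2+\bigl(2-\dH(\p(\mathbf{c}_k),\p(\mathbf{c}_i))\bigr)$. This is at least $3$ if and only if $\dH(\p(\mathbf{c}_k),\p(\mathbf{c}_i))\le 1$, i.e., the two parities are not complements. Hence the validity condition holds if and only if no pair of HCWs at distance $3$ receives complementary parities.

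This gives both directions. For sufficiency, every pair in the three distance classes satisfies \eqref{eq:cw_ncw_req}. For necessity, if some pair at distance $3$ had complementary parities, the NHCW constructed above would violate the condition. The construction uses $9=7+2$ coordinates, which is the optimal redundancy, so the code is optimal.

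The only delicate step is the exact correspondence in the case $d=2$. It needs the perfectness of the code: the spheres are disjoint, so $\mathbf{u}_{i,j}$ has a unique centre, and the existence direction requires exhibiting a specific coordinate $j$.
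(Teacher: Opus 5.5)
Your proposal is correct and follows essentially the same route as the paper: fix the NHCW parities by the Complementary Parity Rule, note that only (HCW, NHCW) pairs at message distance $2$ can violate \eqref{eq:cw_ncw_req}, identify these with pairs of HCWs at distance $3$, and observe that the violation occurs exactly when those two HCWs receive complementary parities. You also supply details the paper leaves implicit: the triangle-inequality argument for the distance-$2$ characterization, the perfectness argument for the distance-$1$ case, and the explicit violating NHCW $\mathbf{c}_i+\mathbf{e}_j$ for the necessity direction.
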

\begin{IEEEproof}
    The theorem states that for any pair of HCWs $(\mathbf{c}_i, \mathbf{c}_j), \ i,j \in [16], i \neq j,$ s.t. $\dH(\mathbf{c}_i, \mathbf{c}_j)=3$, $\p_\mathcal{C}(\mathbf{c}_i) \neq \overline{\p_\mathcal{C}(\mathbf{c}_j)}$, in a valid SEFCC. 
    A parity assignment $\mathcal{C}$ satisfying the complementary parity rule in Proposition \ref{prop:CW_NCW CPR} violates the distance requirement in \eqref{eq:cw_ncw_req} only if a HCW $\mathbf{c}_i$ and a NHCW $\mathbf{u}_{k,j}$ that are at a Hamming distance $2$ are assigned the same two-bit parity vector. For a given HCW $\mathbf{c}_i$, all NHCW vectors $\mathbf{u}_{k,j}$ s.t. $\dH(\mathbf{c}_i,\mathbf{u}_{k,j}) = 2$ lie in $1$-Hamming spheres of Hamming codewords $\mathbf{c}_k$ such that the distance $\dH(\mathbf{c}_i,\mathbf{c}_{k}) = 3$.  Since, from Proposition \ref{prop:CW_NCW CPR}, we have $\p_\mathcal{C}(\mathbf{u}_{k,j}) = \overline{\p_\mathcal{C}(\mathbf{c}_k)}$, this implies that a parity assignment is invalid iff $\p_\mathcal{C}(\mathbf{c}_{i}) = \overline{\p_\mathcal{C}(\mathbf{c}_k)}$ for any pair  $(\mathbf{c}_i,\mathbf{c}_{k})$ of HCWs at Hamming distance $3$. 
\end{IEEEproof}

\begin{theorem}[Valid SEFCC]
\label{thm:validity}
An optimal SEFCC $\mathcal{C}: \F_2^7 \rightarrow \F_2^9$ for the HCMF $f_H$ must assign parities to the vectors in $\F_2^7$ satisfying the following two conditions: 
\begin{enumerate}
    \item[a)] $\p_\mathcal{C}(\mathbf{u}_{i,j}) = \overline{\p_\mathcal{C}(\mathbf{c}_i)}$, $\forall j \in [7]$, for each $i \in [16]$. 
    \item[b)] $\p_\mathcal{C}(\mathbf{c}_{i}) \neq \overline{\p_\mathcal{C}(\mathbf{c}_k)}$ for any pair  $(\mathbf{c}_i,\mathbf{c}_{k})$  such that $\dH(\mathbf{c}_i,\mathbf{c}_{k}) = 3$, $i,k \in [16]$. 
\end{enumerate}
\end{theorem}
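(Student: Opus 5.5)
The theorem simply packages Proposition~\ref{prop:CW_NCW CPR} and Proposition~\ref{prop:CW_NPR}, so the plan is to derive each condition from the corresponding proposition, checking that no further requirement arises. I would argue as follows.

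\textbf{Condition a).} Let $\mathcal{C}$ be any optimal SEFCC, so $r=2$ by the redundancy result of \cite{lenz2023fcc} for Boolean functions. Apply \eqref{eq:cw_ncw_req} to the pair $(\mathbf{c}_i,\mathbf{u}_{i,j})$. Since $\dH(\mathbf{c}_i,\mathbf{u}_{i,j})=1$, the parities must differ in at least $2$ positions. With parity length $2$, this forces $\p_\mathcal{C}(\mathbf{u}_{i,j})=\overline{\p_\mathcal{C}(\mathbf{c}_i)}$, which is exactly Proposition~\ref{prop:CW_NCW CPR}. Perfectness of the Hamming code guarantees that every NHCW is some $\mathbf{u}_{i,j}$ for a unique $i$. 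Hence a) is well defined and fixes all $112$ NHCW parities in terms of the $16$ codeword parities.

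\textbf{Condition b).} Given a), consider any remaining pair $(\mathbf{c}_i,\mathbf{u}_{k,j})$ with $k\neq i$. The information distance is $\dH(\mathbf{c}_i,\mathbf{u}_{k,j})\ge \dH(\mathbf{c}_i,\mathbf{c}_k)-1\ge 2$, so \eqref{eq:cw_ncw_req} can fail only when this distance is exactly $2$ and the parities coincide. The key geometric step, used in Proposition~\ref{prop:CW_NPR}, is that $\dH(\mathbf{c}_i,\mathbf{u}_{k,j})=2$ forces $\dH(\mathbf{c}_i,\mathbf{c}_k)=3$. Conversely, for any $\mathbf{c}_k$ at distance $3$ from $\mathbf{c}_i$, flipping one of the three differing coordinates of $\mathbf{c}_k$ gives a neighbour $\mathbf{u}_{k,j}$ at distance $2$ from $\mathbf{c}_i$. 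Combining this with a), a violation occurs exactly when $\p_\mathcal{C}(\mathbf{c}_i)=\overline{\p_\mathcal{C}(\mathbf{c}_k)}$ for some distance-$3$ pair, which is b).

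\textbf{Sufficiency.} Since the statement says ``must'', only necessity is required. Still, I would note that Proposition~\ref{prop:CW_NPR} also shows a) and b) together are sufficient, because all other cross-class pairs already have information distance at least $3$. This is the expected obstacle: it needs $\dH(\mathbf{c}_i,\mathbf{u}_{k,j})\ge 3$ whenever $\dH(\mathbf{c}_i,\mathbf{c}_k)\ge 4$, which follows from the triangle inequality.
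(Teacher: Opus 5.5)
Your proposal is correct and follows the paper's route. The paper's proof simply cites Propositions~\ref{prop:CW_NCW CPR} and~\ref{prop:CW_NPR}, and you re-derive exactly those two arguments; your explicit converse step (flipping one of the three differing coordinates of $\mathbf{c}_k$ to get a neighbour at distance $2$ from $\mathbf{c}_i$) is what makes b) genuinely necessary, a point the paper's proof of Proposition~\ref{prop:CW_NPR} leaves implicit.
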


\begin{IEEEproof}
Follows from Propositions \ref{prop:CW_NCW CPR} and \ref{prop:CW_NPR}.
\end{IEEEproof}

\subsection{Distance Optimization Among Valid FCCs}
While an SEFCC for the HCMF constructed according to Theorem~\ref{thm:optimFER} achieves optimal function error rate (FER), such constructions have minimum Hamming distance $d_{\min}=1$, which leads to poor data-bit error performance. This motivates restricting attention to the subclass of valid SEFCCs with $d_{\min}=2$, for which a necessary and sufficient condition is provided in Theorem~\ref{thm:dmin2}. Among this restricted class, we optimize the distance spectrum by maximizing the sum of pairwise codeword distances $\Sigma_d(\mathcal{C})$, corresponding to the first moment of the distance distribution. Although error probability under maximum-likelihood decoding depends on the full distance spectrum, we show that, for SEFCCs designed for the HCMF, sum-distance maximization is not merely heuristic but an exact optimality criterion: it simultaneously maximizes $d_{\min}$ and minimizes the number of minimum-distance codeword pairs. Consequently, this approach yields SEFCCs with optimal data-bit error performance among all valid constructions.


\begin{theorem}[Condition for $d_{\min}=2$]
\label{thm:dmin2}
An optimal SEFCC $\mathcal{C}: \F_2^7 \rightarrow \F_2^9$ for the HCMF $f_H$ will have a minimum distance $d_{\min}(\mathcal{C}) = 2$ if and only if no pair of HCWs at a distance $3$ are assigned the same parity vectors, i.e., $\p_\mathcal{C}(\mathbf{c}_{i}) \neq \p_\mathcal{C}(\mathbf{c}_k)$ for any pair  $(\mathbf{c}_i,\mathbf{c}_{k})$  such that $\dH(\mathbf{c}_i,\mathbf{c}_{k}) = 3$, $i,k \in [16]$. 
\end{theorem}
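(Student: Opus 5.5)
We will prove the claim by listing every kind of pair of encodings whose distance could be $1$ or $2$, and checking which parity coincidences produce distance $1$. Throughout, $\mathcal{C}$ is a valid optimal SEFCC, so conditions a) and b) of Theorem~\ref{thm:validity} hold, and we write $D_\mathcal{C}(\mathbf{x},\mathbf{y})=\dH(\mathbf{x},\mathbf{y})+\dH(\p_\mathcal{C}(\mathbf{x}),\p_\mathcal{C}(\mathbf{y}))$.

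The first step is to show that $d_{\min}(\mathcal{C})\le 2$ always. Take any $i\in[16]$ and two distinct neighbours $\mathbf{u}_{i,j},\mathbf{u}_{i,j'}$ of $\mathbf{c}_i$. They are at data distance $2$, and condition a) gives both the parity $\overline{\p_\mathcal{C}(\mathbf{c}_i)}$. Hence their encodings are at distance exactly $2$, so it is enough to decide when $d_{\min}=1$.

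Distance $1$ requires the data parts to be at distance $0$ or $1$. Distance $0$ means the same input, so it is excluded, and we need data distance $1$ together with equal parities. We classify pairs at data distance $1$ using perfectness of the Hamming code.
\begin{itemize}
\item[(i)] An HCW–NHCW pair $(\mathbf{c}_i,\mathbf{u}_{i,j})$ always has encoding distance $3$ by the FCC constraint.
\item[(ii)] Two HCWs are never at data distance $1$, since the code has minimum distance $3$.
\item[(iii)] Two NHCWs $\mathbf{u}_{i,j},\mathbf{u}_{k,l}$ at data distance $1$ cannot lie in the same sphere: two neighbours of the same codeword differ in two positions. So $k\neq i$. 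By the triangle inequality $\dH(\mathbf{c}_i,\mathbf{c}_k)\le 3$, and since the code has minimum distance $3$, $\dH(\mathbf{c}_i,\mathbf{c}_k)=3$. Their parities are $\overline{\p_\mathcal{C}(\mathbf{c}_i)}$ and $\overline{\p_\mathcal{C}(\mathbf{c}_k)}$, which are equal iff $\p_\mathcal{C}(\mathbf{c}_i)=\p_\mathcal{C}(\mathbf{c}_k)$.
\end{itemize}
Therefore $d_{\min}=1$ iff some distance-$1$ NHCW pair lies in the spheres of two distance-$3$ HCWs that share a parity. This gives the ``only if'' direction once we know the converse. For the ``if'' direction, if no distance-$3$ HCW pair shares a parity, then none of (i)–(iii) yields distance $1$, and together with the first step $d_{\min}=2$.

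The main obstacle is the converse: whenever $\dH(\mathbf{c}_i,\mathbf{c}_k)=3$, a distance-$1$ NHCW pair must actually exist across the two spheres. We construct it explicitly. Let $\mathbf{c}_i$ and $\mathbf{c}_k$ differ in positions $\{a,b,c\}$. Set $\mathbf{u}=\mathbf{c}_i+\mathbf{e}_a\in S_i$ and $\mathbf{v}=\mathbf{c}_k+\mathbf{e}_c\in S_k$. Then $\mathbf{u}+\mathbf{v}=\mathbf{e}_a+\mathbf{e}_b+\mathbf{e}_c+\mathbf{e}_a+\mathbf{e}_c=\mathbf{e}_b$, so $\dH(\mathbf{u},\mathbf{v})=1$, and both vectors are NHCWs. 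If $\p_\mathcal{C}(\mathbf{c}_i)=\p_\mathcal{C}(\mathbf{c}_k)$, then $\Enc(\mathbf{u})$ and $\Enc(\mathbf{v})$ are at distance $1$, so $d_{\min}=1$. This proves the equivalence.

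As a consistency check, condition b) already forbids complementary parities on distance-$3$ pairs. So the $d_{\min}=2$ requirement means such pairs receive parities at distance exactly $1$, which is the setting the later bipartite construction works in.
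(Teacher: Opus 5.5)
Your proof is correct and follows the paper's route: two NHCWs in one sphere give $d_{\min}\le 2$, and a distance-$1$ pair of encodings can only come from NHCWs in the spheres of two distance-$3$ HCWs that share a parity. Your explicit pair $\mathbf{u}=\mathbf{c}_i+\mathbf{e}_a$, $\mathbf{v}=\mathbf{c}_k+\mathbf{e}_c$ proves the converse step (that every distance-$3$ HCW pair really yields such a cross-sphere NHCW pair), which the paper only asserts by calling the two spheres ``adjacent.''
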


\begin{IEEEproof}
The minimum distance of the code $\mathcal{C}$ is given by $d_{\min}(\mathcal{C}) = \min_{\mathbf{u}\neq \mathbf{v}} \Big(\dH(\Enc(\mathbf{u}), \Enc(\mathbf{v})\Big)$ for $\mathbf{u}, \mathbf{v}  \in \F_2^7$. For any optimal SEFCC $\mathcal{C}$ of the HCMF $f_H$, the minimum distance is at most two, i.e., $d_{\min}(\mathcal{C}) \leq 2$. To achieve $d_{\min}(\mathcal{C})=2$, we must ensure that no pair $(\mathbf{u}, \mathbf{v})$ of vectors in $\F_2^7$ has their corresponding codewords at a distance of $1$.

For any HCW $\mathbf{c}_i$, the set of $7$ nearest non-codewords $\{\mathbf{u}_{i,j}\}_{j=1}^7$ all receive the same parity $\overline{\p_{\mathcal{C}}(\mathbf{c}_i)}$ by Proposition~\ref{prop:CW_NCW CPR}. Since the distance between distinct NHCWs in the same sphere is $\dH(\mathbf{u}_{i,j}, \mathbf{u}_{i,k}) = 2$, the pairwise Hamming distance between their encodings is $2 + \dH(\overline{\p_{\mathcal{C}}(\mathbf{c}_i)}, \overline{\p_{\mathcal{C}}(\mathbf{c}_i)}) = 2$. 

From Theorem~\ref{thm:validity}, the distance between the codewords, in any valid SEFCC $\mathcal{C}$ of the HCMF $f_C$, of any HCW and NHCW is at least $3$. Therefore, a distance of $1$ in $\mathcal{C}$ can occur only between the encodings of two NHCWs $\mathbf{u} \in S_i$ and $\mathbf{v} \in S_k$ from different spheres. For $\dH(\Enc(\mathbf{u}), \Enc(\mathbf{v})) = 1$, two conditions must hold simultaneously:
\begin{enumerate}
    \item $\dH(\mathbf{u}, \mathbf{v}) = 1$: This occurs if and only if the spheres $S_i$ and $S_k$ are adjacent, which implies $\dH(\mathbf{c}_i, \mathbf{c}_k) = 3$.
    \item $\p_{\mathcal{C}}(\mathbf{u}) = \p_{\mathcal{C}}(\mathbf{v})$: By Proposition~\ref{prop:CW_NCW CPR}, this is equivalent to $\overline{\p_{\mathcal{C}}(\mathbf{c}_i)} = \overline{\p_{\mathcal{C}}(\mathbf{c}_k)}$, or simply $\p_{\mathcal{C}}(\mathbf{c}_i) = \p_{\mathcal{C}}(\mathbf{c}_k)$.
\end{enumerate}

Consequently, $d_{\min}(\mathcal{C}) = 1$ iff there exists a pair of distance-3 HCW vectors that are assigned the same parity in $\mathcal{C}$. Avoiding such parity assignments ensures that all inter-sphere NHCW pairs have their codewords in $\mathcal{C}$ at a distance $\geq 1 + 1 = 2$. Since there are no other pairs of vectors that can have a pairwise distance of $2$ between their encodings in a valid SEFCC, this results in $d_{\min}(\mathcal{C})=2$.  
\end{IEEEproof}

\begin{lemma}
    \label{lem:maxsum}
    The sum-distance $\Sigma_{d}(\mathcal{C})$ for any optimal SEFCC $\mathcal{C}$ for the HCMF $f_H$ is upper bounded as $\Sigma_{d}(\mathcal{C}) \le 73728$.
\end{lemma}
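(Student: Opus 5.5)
The plan is to decompose the sum-distance coordinate by coordinate. The key point is that the Hamming distance between two length-$9$ encodings is the sum of per-coordinate disagreement indicators, so $\Sigma_d(\mathcal{C})$ splits into nine independent column contributions. By the definition of the HDM, $D_\mathcal{C}(i,j)$ is the distance on the $7$ systematic coordinates plus the distance on the $2$ parity coordinates. Hence
\[
\Sigma_d(\mathcal{C}) = \sum_{i,j}\dH(\mathbf{u}_i,\mathbf{u}_j) + \sum_{i,j}\dH\big(\p_\mathcal{C}(\mathbf{u}_i),\p_\mathcal{C}(\mathbf{u}_j)\big),
\]
with both sums over ordered pairs of the $128$ vectors of $\F_2^7$.

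For a single binary column, suppose $a$ of the $128$ entries equal $1$. The number of ordered pairs that disagree in that column is then $2a(128-a)$. By the AM--GM inequality (or by maximizing a concave quadratic), this is at most $2\cdot 64\cdot 64 = 8192$, with equality iff $a=64$.

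For the systematic part, every one of the $7$ coordinates of $\F_2^7$ is exactly balanced, so its contribution is exactly $7\cdot 8192 = 57344$. This term is independent of $\mathcal{C}$.

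For the parity part, the parity vectors have length $2t=2$, so there are only two parity columns. Each contributes at most $8192$, so the parity part is at most $16384$. Adding the two parts gives $\Sigma_d(\mathcal{C}) \le 57344 + 16384 = 73728$, which is the claimed bound.

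The bound does not use the validity conditions of Theorem~\ref{thm:validity} at all. It therefore holds for every optimal SEFCC, and indeed for any assignment of $2$-bit parities. I will also record the equality condition, since it is what the later construction must meet: equality holds iff each parity bit equals $1$ on exactly $64$ of the $128$ inputs.

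I expect no real obstacle. The only care needed is bookkeeping: the sum runs over ordered pairs, including the diagonal terms, which contribute $0$, so the factor $2$ in $2a(128-a)$ is correct and matches the constant $73728$. As a further remark, I would translate the equality condition through Proposition~\ref{prop:CW_NCW CPR}. Each codeword sphere carries one parity on its centre and the complementary parity on its $7$ neighbours, so a parity bit is balanced iff exactly $8$ of the $16$ HCWs have that bit equal to $0$. This makes clear which assignments attain the bound.
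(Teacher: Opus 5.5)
Your proposal is correct and follows essentially the same route as the paper's proof. The paper likewise splits $\Sigma_d(\mathcal{C})$ into the constant systematic term $7\cdot 2^{13}=57344$ and a parity term bounded by $2\sum_{j=1}^{2} n_0^{(j)} n_1^{(j)} \le 16384$. Your added equality analysis is also correct (via Proposition~\ref{prop:CW_NCW CPR}, a bit is balanced iff exactly $8$ of the $16$ HCWs have that bit equal to $0$), and it matches the balance condition the paper invokes later in Theorem~\ref{thm:maxsumFCC}.
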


\begin{IEEEproof}
 We have the sum-distance $\Sigma_d(\mathcal{C}) = \sum_{i=1}^{2^7} \sum_{j=1}^{2^7} \left( \dH(\mathbf{u}_i, \mathbf{u}_j) + \dH(\p_{\mathcal{C}}(\mathbf{u}_i), \p_{\mathcal{C}}(\mathbf{u}_j)) \right)$. The first term represents the total pairwise distance between all vectors in the domain $\mathbb{F}_2^7$. For $k=7$, this sum is constant: $\sum_{i=1}^{2^k} \sum_{j=1}^{2^k} \dH(\mathbf{u}_i, \mathbf{u}_j) = k 2^{2k-1} = 7 \cdot 2^{13} = 57344$.

The second term depends on the parity assignment $\p_{\mathcal{C}}(\mathbf{u})$. For each parity coordinate $j \in \{1, 2\}$, let $n_0^{(j)}$ and $n_1^{(j)}$ denote the number of vectors $\mathbf{u} \in \mathbb{F}_2^7$ such that the $j$-th bit of $\p_{\mathcal{C}}(\mathbf{u})$ is $0$ and $1$, respectively. The sum of parity distances is maximized when the bits are balanced ($n_0^{(j)} = n_1^{(j)} = 64$): $\sum_{i=1}^{2^7} \sum_{j=1}^{2^7} \dH(\p_{\mathcal{C}}(\mathbf{u}_i), \p_{\mathcal{C}}(\mathbf{u}_j)) = 2 \sum_{j=1}^{2} n_0^{(j)} n_1^{(j)} \leq 2 \cdot (2 \cdot 64 \cdot 64) = 16384$.
Summing these components, we obtain:
$\Sigma_d(\mathcal{C}) \leq 57344 + 16384 = 73728$.
\end{IEEEproof}


\subsection{Optimal SEFCC Construction}
\label{subsec:CodeConst}
The $[7,4,3]$-Hamming Code has a highly regular neighborhood structure. In particular, distance-$3$ relations among codewords form a graph with strong symmetry. We make use of this structure (Proposition \ref{prop:bipartite}) to construct optimal SEFCCs for the Hamming code membership function with maximum sum-distance property. We use the concise notation $u_1u_2$ to denote a 2-length vector $[u_1, u_2]$ in the rest of this manuscript, primarily to denote parity vectors.

\begin{proposition}[Bipartite structure of distance-3 HCW graph]
    \label{prop:bipartite}
Let $G_3 = (\mathcal{V}, \mathcal{E})$ be the graph whose vertices are the $16$ codewords of a $[7,4,3]$-Hamming code, with an edge between two vertices representing distinct Hamming codewords if they are at a Hamming distance $3$, i.e., $\mathcal{V} = \mathscr{C}_H$ and $\mathcal{E} = \{(\mathbf{c}_i,\mathbf{c}_j) \ | \ \dH(\mathbf{c}_i,\mathbf{c}_j)=3\}$. Then, $G_3$ is bipartite.
\end{proposition}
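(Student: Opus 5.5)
Our plan is to exhibit an explicit bipartition of $\mathscr{C}_H$ by Hamming weight parity. Every codeword of the $[7,4,3]$-Hamming code has weight in $\{0,3,4,7\}$. The weight enumerator is $1 + 7z^3 + 7z^4 + z^7$. Define
\[
\mathcal{V}_{\mathrm{even}} = \{\mathbf{c}\in\mathscr{C}_H : \wt(\mathbf{c}) \in\{0,4\}\}, \qquad \mathcal{V}_{\mathrm{odd}} = \{\mathbf{c}\in\mathscr{C}_H : \wt(\mathbf{c}) \in\{3,7\}\}.
\]
Each part has $8$ vertices. We claim that every edge of $G_3$ joins $\mathcal{V}_{\mathrm{even}}$ to $\mathcal{V}_{\mathrm{odd}}$, which is exactly the statement that $G_3$ is bipartite.

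The key identity is $\dH(\mathbf{c}_i,\mathbf{c}_j) = \wt(\mathbf{c}_i) + \wt(\mathbf{c}_j) - 2\,\wt(\mathbf{c}_i \ast \mathbf{c}_j)$, where $\ast$ denotes the coordinatewise product. It gives
\[
\dH(\mathbf{c}_i,\mathbf{c}_j) \equiv \wt(\mathbf{c}_i)+\wt(\mathbf{c}_j) \pmod 2.
\]
Hence if $\dH(\mathbf{c}_i,\mathbf{c}_j)=3$ is odd, the weights of $\mathbf{c}_i$ and $\mathbf{c}_j$ have opposite parity, so one lies in $\mathcal{V}_{\mathrm{even}}$ and the other in $\mathcal{V}_{\mathrm{odd}}$. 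No edge lies inside a part, and the proof is complete.

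Alternatively, by linearity $\dH(\mathbf{c}_i,\mathbf{c}_j)=\wt(\mathbf{c}_i+\mathbf{c}_j)$. The even-weight codewords form the $[7,3,4]$ simplex subcode, of index $2$ in $\mathscr{C}_H$. Its two cosets are the even- and odd-weight codewords, and an odd-weight difference must cross the cosets.

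There is no real obstacle here. The only point needing care is to justify the weight distribution, or simply to use the parity identity, which does not need it. As a byproduct, the edge count can be recorded for later use: each codeword has $7$ neighbours at distance $3$ (the weight-$3$ count, by linearity). So $G_3$ is a $7$-regular bipartite graph on $8+8$ vertices, namely $K_{8,8}$ minus a perfect matching. The missing matching pairs each $\mathbf{c}$ with its complement $\overline{\mathbf{c}}$, which lies at distance $7$. This structure is what the subsequent construction will exploit.
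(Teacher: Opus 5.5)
Your proof is correct and takes the same route as the paper: bipartition $\mathscr{C}_H$ by weight parity, and note that an odd distance forces opposite weight parities. Your identity $\dH(\mathbf{c}_i,\mathbf{c}_j)=\wt(\mathbf{c}_i)+\wt(\mathbf{c}_j)-2\,\wt(\mathbf{c}_i\ast\mathbf{c}_j)$ spells out the parity step that the paper leaves implicit after writing $\dH=\wt(\mathbf{c}_1\oplus\mathbf{c}_2)$, and your byproduct ($G_3$ is $K_{8,8}$ minus the complement matching, hence $7$-regular) is correct and supplies the $7$-regularity the paper later assumes without proof.
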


\begin{IEEEproof}
We know that for a linear code, $\dH(\mathbf{c}_1,\mathbf{c}_2)=\wt(\mathbf{c}_1\oplus \mathbf{c}_2)$.
If $\dH(\mathbf{c}_1,\mathbf{c}_2)=3$, then $\wt(\mathbf{c}_1\oplus \mathbf{c}_2)$ is odd, which implies $\wt(\mathbf{c}_1)$ and $\wt(\mathbf{c}_2)$ have opposite parity.
Thus, partitioning the codewords in a $[7,4,3]$-Hamming code by even/odd weight yields a bipartition.
\end{IEEEproof}

\subsubsection{FCC Construction Using Bipartition}
\label{subsubsec:Construction}
Let $\mathscr{C}_H \subset \F_2^7$ denote the set of Hamming codewords corresponding to a $[7,4,3]$-Hamming code. By Proposition~\ref{prop:bipartite}, $\mathscr{C}_H$
admits a bipartition
\(
\mathscr{C}_H = \mathcal{C}_o \cup \mathcal{C}_e,
\)
where $\mathcal{C}_o$ and $\mathcal{C}_e$ denote the subsets of Hamming codewords
with odd and even Hamming weights, respectively, and
$\lvert \mathcal{C}_o \rvert = \lvert \mathcal{C}_e \rvert = 8$.

Define two complementary parity-pair sets
\(
P_1 \triangleq \{00, \  11\}\) and 
\(P_2 \triangleq \{01, \ 10\}.
\)
The SEFCC $\mathcal{C}:\F_2^7 \to \F_2^9$ is constructed by appending two parity bits to each Hamming codeword according to the following steps.

\begin{enumerate}
    \item \textbf{Group assignment} (validity):  
    Assign parity vectors from $P_1$ to all Hamming codewords in $\mathcal{C}_o$, and parity
    vectors from $P_2$ to all Hamming codewords in $\mathcal{C}_e$.
    (Equivalently, the roles of $P_1$ and $P_2$ may be interchanged.)

    \item \textbf{Internal split} (maximum sum-distance):  
    Enforce an equal split of parity pairs within each group. Specifically,
    \begin{itemize}
        \item assign the parity vector $00$ to exactly four HCWs in
        $\mathcal{C}_o$ and $11$ to the remaining four HCWs in
        $\mathcal{C}_o$;
        \item assign the parity pair $01$ to exactly four HCWs in
        $\mathcal{C}_e$ and $10$ to the remaining four HCWs in
        $\mathcal{C}_e$.
    \end{itemize}
\end{enumerate}

\begin{example}
\label{ex:HC_code}
Consider the $[7,4,3]$-Hamming code $\left\{\textbf{c}_i = [\mathbf{m}_i, p_{i,1}, p_{i,2}, p_{i,3}], \ i \in [16]\right\}$, where the message vector $\mathbf{m}_i \in \mathbb{F}_2^4$, which is the $i^{\text{th}}$ vector in the natural binary counting order, results in parity bits $p_{i,1} = m_{i,2} \oplus m_{i,3} \oplus m_{i,4}$, $p_{i,2} = m_{i,1} \oplus m_{i,3} \oplus m_{i,4}$, and $p_{i,3} = m_{i,1} \oplus m_{i,2} \oplus m_{i,4}$. Hamming codewords partition into odd-weight $\mathcal{C}_o = \{\mathbf{c}_3, \mathbf{c}_4, \mathbf{c}_5, \mathbf{c}_6, \mathbf{c}_9, \mathbf{c}_{10}, \mathbf{c}_{15}, \mathbf{c}_{16}\}$ and even-weight $\mathcal{C}_e = \{\mathbf{c}_1, \mathbf{c}_2, \mathbf{c}_7, \mathbf{c}_8, \mathbf{c}_{11}, \mathbf{c}_{12}, \mathbf{c}_{13}, \mathbf{c}_{14}\}$ partite sets. An optimal bipartition-based SEFCC assigns the parity $00$ to every HCW in 
$\{\mathbf{c}_3, \mathbf{c}_4, \mathbf{c}_5, \mathbf{c}_6\}$, the parity $11$ to 
$\{\mathbf{c}_9, \mathbf{c}_{10}, \mathbf{c}_{15}, \mathbf{c}_{16}\}$, the parity $01$ to 
$\{\mathbf{c}_1, \mathbf{c}_2, \mathbf{c}_7, \mathbf{c}_8\}$, and the parity $10$ to 
$\{\mathbf{c}_{11}, \mathbf{c}_{12}, \mathbf{c}_{13}, \mathbf{c}_{14}\}$.
\end{example}



\subsubsection{Distance Properties of the Constructed FCCs}

\begin{theorem}[Max-sum FCCs]
\label{thm:maxsumFCC}
    All optimal SEFCCs $\mathcal{C}: \F_2^7 \rightarrow \F_2^9$ for the HCMF $f_H$ that attain the maximum sum-distance established in Lemma \ref{lem:maxsum} are obtained via the bipartition-based code construction described in section \ref{subsubsec:Construction}. 
    \end{theorem}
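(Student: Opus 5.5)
The plan is to turn the equality case of Lemma~\ref{lem:maxsum} into counting constraints on the $16$ Hamming codeword parities. Then I combine these with the validity condition of Theorem~\ref{thm:validity} and with the explicit shape of $G_3$ from Proposition~\ref{prop:bipartite}. The aim is to show that every max-sum assignment uses one of $P_1,P_2$ on $\mathcal{C}_o$, the other on $\mathcal{C}_e$, with a $4/4$ split inside each side.

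\textbf{Step 1 (equality forces balanced codeword bits).} Equality in Lemma~\ref{lem:maxsum} holds iff $n_0^{(j)}=n_1^{(j)}=64$ for $j=1,2$. By Proposition~\ref{prop:CW_NCW CPR}, each sphere $S_i$ contributes one vector with the parity of $\mathbf{c}_i$ and seven with its complement. So if $x_j$ Hamming codewords have parity bit $j$ equal to $1$, then
\[
n_1^{(j)}=x_j+7(16-x_j).
\]
Setting this equal to $64$ gives $x_j=8$. Writing $n_{ab}$ for the number of Hamming codewords with parity $ab$, the conditions $n_{10}+n_{11}=8$ and $n_{01}+n_{11}=8$ with total $16$ give
\[
n_{00}=n_{11}=a,\qquad n_{01}=n_{10}=8-a .
\]
Conversely, any assignment with these counts attains $73728$. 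So the construction of Section~\ref{subsubsec:Construction} is max-sum, and it remains to show that validity forces its structure.

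\textbf{Step 2 (structure of $G_3$).} Each codeword has $A_3=7$ neighbours at distance $3$, and by Proposition~\ref{prop:bipartite} these all lie in the opposite partite set of size $8$. The unique non-neighbour there is the antipode $\overline{\mathbf{c}}$, since the all-ones word lies in the code and $\dH(\mathbf{c},\overline{\mathbf{c}})=7$. Hence $G_3$ is $K_{8,8}$ minus the antipodal perfect matching. Validity (Theorem~\ref{thm:validity}(b)) then says: an odd codeword and an even codeword carrying complementary parities must be antipodes.

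\textbf{Step 3 (main obstacle: ruling out mixed assignments).} Suppose some parity $p$ occurs on the odd side and $\overline{p}$ occurs on the even side. Take an even vertex $\mathbf{c}$ with parity $\overline{p}$. All odd vertices except $\overline{\mathbf{c}}$ are adjacent to $\mathbf{c}$, so at most one odd vertex carries $p$. Symmetrically, at most one even vertex carries $\overline{p}$.

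I would first try to combine this with the counts $n_p=n_{\overline{p}}$ from Step 1 to get a contradiction. Let $s$ be the number of odd vertices with parity $p$, and let $t$ be the number of even vertices with parity $\overline{p}$. If $s,t\ge 1$ then $s,t\le 1$, so $n_p+n_{\overline p}$ is small, and the remaining parities are forced onto many vertices. Running the same argument for the other complementary pair should yield a contradiction with the counts (or, in the degenerate case $a\in\{0,8\}$, with the $8$/$8$ side sizes). The cases $a=0$ and $a=8$ need separate handling: there only one class is used, and mixing $00$ and $11$ across sides is immediately contradicted by the ``at most one'' bound against $n_{00}=n_{11}=8$.

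This case bookkeeping is where I expect the main difficulty. The cleanest route is probably to show that each side uses only one complementary class, i.e.\ that the odd side cannot contain both a $P_1$ parity and a $P_2$ parity. The reason is that each even vertex is adjacent to seven of the eight odd vertices, so it would then be forced to avoid too many parities.

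\textbf{Step 4 (conclusion).} Once each side uses a single class, and the two sides use different classes, the counts $n_{00}=n_{11}$ and $n_{01}=n_{10}$ on sides of size $8$ force $a=4$. This gives exactly the group assignment and internal split of Section~\ref{subsubsec:Construction}, up to interchanging $P_1$ and $P_2$.
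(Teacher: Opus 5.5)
Your Steps 1, 2 and 4 are correct. Step 1 is more careful than the paper, which asserts codeword-level balance without deriving it; you get $x_j=8$ from $n_1^{(j)}=112-6x_j$. Organizing the argument around the exact shape of $G_3$ ($K_{8,8}$ minus the antipodal matching) and around cross-side complementary occurrences is a different and cleaner route than the paper's. The paper instead argues that two non-complementary parities inside $\mathcal{C}_e$ push six odd vertices into $\{00,01\}$, and then appeals to a ``circular constraint''.

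The gap is Step 3, which carries the whole content of the theorem and which you leave at ``should yield a contradiction''. The one inference you state there does not hold as written. Write $O_q,E_q$ for the odd/even codewords with parity $q$. Then $s,t\le 1$ bounds only $|O_p|$ and $|E_{\overline{p}}|$ and says nothing about $|O_{\overline{p}}|$ or $|E_p|$, so $n_p+n_{\overline{p}}$ is not yet small. Your suggested ``cleanest route'' also does not work by adjacency alone: a parity carried by a single odd vertex is not seen by that vertex's antipode, which stays free to take the complement.

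What closes the argument is a per-class dichotomy that uses $n_q=n_{\overline{q}}$ from Step 1. Take a complementary class $\{q,\overline{q}\}$.
\begin{itemize}
\item \emph{Non-crossing case.} Suppose neither the pair $(O_q,E_{\overline{q}})$ nor the pair $(O_{\overline{q}},E_q)$ consists of two nonempty sets. Balance excludes the sub-cases where one parity is absent and the other present. So the class lies entirely on one side, split evenly.
\item \emph{Crossing case.} Otherwise, say $O_q,E_{\overline{q}}\neq\emptyset$. Your Step 2 gives $|O_q|=|E_{\overline{q}}|=1$. Balance then forces $|E_q|=|O_{\overline{q}}|$, and if this common value is positive the same argument makes it $1$. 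So a crossing class contains at most $4$ codewords.
\end{itemize}
The two classes $P_1,P_2$ have sizes $2a$ and $16-2a$, and each side has $8$ vertices. If one class crossed, the other would have at least $12$ codewords, exceeding both the crossing bound $4$ and the one-side bound $8$. Hence both classes are confined to single sides, each has exactly $8$ elements, and they occupy opposite sides. This forces $a=4$ with $4/4$ splits, which is your Step 4, and it also disposes of $a\in\{0,8\}$ without separate handling.
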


\begin{IEEEproof}
    A code $\mathcal{C}$ attains maximum sum-distance iff it satisfies parity balance across the $16$ HCWs (Lemma \ref{lem:maxsum}). Let $n_{\p}$ denote the count of HCWs assigned parity $\p \in \{00, 01, 10, 11\}$. Parity balance at each bit position requires:
\begin{equation}
\label{eq:rig_bal}
    n_{00} + n_{01} = n_{10} + n_{11} = 8,\ n_{00} + n_{10} = n_{01} + n_{11} = 8
\end{equation}
Theorem~\ref{thm:validity} mandates that for any HCW pair $(\mathbf{c}_i, \mathbf{c}_j)$ such that $\dH(\mathbf{c}_i, \mathbf{c}_j) = 3$, $\p(\mathbf{c}_i) \neq \overline{\p(\mathbf{c}_j)}$. Let $G_3$ be the $7$-regular bipartite graph with partite sets $\mathcal{C}_e$ and $\mathcal{C}_o$ (Proposition \ref{prop:bipartite}).

Let $\p_{\mathcal{C}}(\mathcal{C}_e)$ and $\p_{\mathcal{C}}(\mathcal{C}_o)$ denote the set of parity vectors assigned by $\mathcal{C}$ to the HCWs in $\mathcal{C}_e$ and $\mathcal{C}_o$, respectively. Suppose $\p_{\mathcal{C}}(\mathcal{C}_e)$ contains a non-complementary pair of parities, say $\p$ and $\mathbf{q}$, $\mathbf{q} \neq \overline{\p}$. W.l.o.g, let $\p=00$ and $\mathbf{q}=01$. Each HCW in $\mathcal{C}_e$ with the parity assignment $00$ forbids its $7$ neighbors in $\mathcal{C}_o$ from being assigned $11$. Similarly, each HCW in $\mathcal{C}_e$ assigned $01$ forbids its $7$ neighbors in $\mathcal{C}_o$ from being assigned $10$. Given $|\mathcal{C}_o|=8$, $7+7-8=6$ nodes in $\mathcal{C}_o$ are simultaneously forbidden from being assigned the parities $\{10, 11\}$. These $6$ nodes must therefore be assigned parities from $\{00, 01\}$.


This creates a surplus of zeros in the first parity bit position. Specifically, if $6$ nodes in $\mathcal{C}_o$ have a leading '$0$', then to satisfy \eqref{eq:rig_bal}, $\mathcal{C}_e$ must have a corresponding surplus of ones. However, populating $\mathcal{C}_e$ with parities starting with '$1$' ($\{10, 11\}$) triggers the same banning mechanism in reverse, forbidding $\{00, 01\}$ from $\mathcal{C}_o$. This circular constraint prevents the system from reaching the $8$-$8$ distribution required by \eqref{eq:rig_bal}. 
Consequently, parity balance and validity (Theorem \ref{thm:validity}) can only coexist if the HCWs in each partite set are assigned parities exclusively from a single complementary pair. Let the HCWs in $\mathcal{C}_e$ and $\mathcal{C}_o$ be assigned parities from $\{\p, \overline{\p}\}$ and $\{\mathbf{q}, \overline{\mathbf{q}}\}$, respectively. Letting $n_{\p}=x, n_{\overline{\p}}=8-x$ in $\mathcal{C}_e$ and $n_{\mathbf{q}}=y, n_{\overline{\mathbf{q}}}=8-y$ in $\mathcal{C}_o$, the balance equations \eqref{eq:rig_bal} reduce to $x+y=8$ and $x+(8-y)=8$. This yields the unique solution $x=y=4$, defining the $4/4$ bipartite split.
Since the construction described in section \ref{subsubsec:Construction} is the only way to achieve parity balance needed for maximum sum-distance, all optimal SEFCCs that attain maximum sum-distance follow this bipartition-based parity-assignment.  
\end{IEEEproof}

\begin{theorem}[Max-sum Max-$d_{\min}$ FCCs]
\label{thm:maxsum_dmin}
    An optimal SEFCC $\mathcal{C}: \F_2^7 \rightarrow \F_2^9$ for the HCMF $f_H$ satisfies all of the following distance properties:
    \begin{enumerate}
        \item[(a)] it achieves the maximum theoretical sum-distance
    $\Sigma_d(\mathcal{C}) = 73728$ characterized in Lemma~\ref{lem:maxsum};
        \item[(b)] it achieves the maximum possible minimum distance, namely
    $d_{\min}(\mathcal{C}) = 2$; and 
        \item[(c)] it minimizes the number of codeword pairs in $\mathcal{C}$ that are at Hamming distance $2$;
    \end{enumerate}
    if and only if $\mathcal{C}$ is obtained via the bipartition-based code construction described in section \ref{subsubsec:Construction}. 
\end{theorem}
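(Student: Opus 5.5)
The plan is to split the equivalence into the two directions and to reduce each property to a count over the graph $G_3$ of Proposition~\ref{prop:bipartite}. One fact about $G_3$ is used throughout. Distinct Hamming codewords are at distance $3$, $4$ or $7$, and complements lie in opposite weight classes. Hence $G_3$ is $K_{8,8}$ minus the perfect matching $\{\mathbf{c},\overline{\mathbf{c}}\}$, and any two distinct codewords in the same class $\mathcal{C}_e$ or $\mathcal{C}_o$ are at distance exactly $4$.

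\emph{Sufficiency.} Take a code from the construction in Section~\ref{subsubsec:Construction}.
\begin{itemize}
\item[(a)] Each parity $00,01,10,11$ is carried by $4$ Hamming codewords and, by Proposition~\ref{prop:CW_NCW CPR}, by $28$ NHCWs with the complementary parity. So every parity occurs $32$ times, each bit is balanced $64/64$, and the bound of Lemma~\ref{lem:maxsum} is met with equality.
\item[(b)] Adjacent codewords in $G_3$ receive parities from $P_1$ and $P_2$ respectively. Such parities are never equal and never complementary, so Theorem~\ref{thm:validity} gives validity and Theorem~\ref{thm:dmin2} gives $d_{\min}=2$.
\end{itemize}

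\emph{Necessity.} If $\mathcal{C}$ attains (a), Theorem~\ref{thm:maxsumFCC} already forces the bipartition construction, so this direction of the equivalence is immediate. It remains to show that the construction actually achieves (c), i.e.\ that no valid SEFCC, even one violating (a) or (b), has fewer distance-$2$ pairs.
\begin{itemize}
\item A code with $d_{\min}=1$ fails (b) and is not competing, so I restrict to valid codes with $d_{\min}=2$. By Theorems~\ref{thm:validity} and~\ref{thm:dmin2}, adjacent codewords in such a code have parities differing in exactly one bit.
\item In any such code, HCW--HCW and HCW--NHCW pairs have encoded distance $\ge 3$. So distance-$2$ pairs come only from three sources:
 \begin{enumerate}
 \item intra-sphere NHCW pairs: $16\binom{7}{2}$ of them, independent of the code;
 \item inter-sphere NHCW pairs at data distance $1$: these lie in spheres whose centers are adjacent in $G_3$, and their parities differ in one bit, so every such pair counts. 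Their number is a constant fixed by $G_3$;
 \item inter-sphere NHCW pairs at data distance $2$ with equal parity.
 \end{enumerate}
\item For the third source, spheres with centers at distance $7$ contribute nothing, since their NHCWs are at distance $\ge 5$. Centers at distance $3$ carry unequal parities. Centers at distance $4$ contribute a constant number $m$ of data-distance-$2$ NHCW pairs per center pair (to be computed once using linearity). The variable part is therefore
\[
m\sum_{X\in\{\mathcal{C}_e,\mathcal{C}_o\}}\ \sum_{\p}\binom{n_{\p}(X)}{2}.
\]
\end{itemize}

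The main obstacle is to show that the $4/4$ split over a complementary pair in each class is the unique minimizer of this quadratic form under the $d_{\min}=2$ constraint. The first step is to argue that each class uses at most two parities, and only a complementary pair. An odd vertex $\mathbf{v}$ is adjacent to $7$ of the $8$ even vertices, so the parities of its even neighbours must all be at distance $1$ from $\p(\mathbf{v})$. A $2$-bit string is at distance $1$ from exactly two strings, which are complementary. Hence the even class, apart possibly from the single vertex $\overline{\mathbf{v}}$, uses parities from one complementary pair, and symmetrically for the odd class.
\begin{itemize}
\item Applying this at every vertex should rule out the exceptional vertex. If needed, I would add a short case check using two different odd vertices, whose non-neighbours are different even vertices.
\item Once each class uses a single complementary pair $\{\p,\overline{\p}\}$ with counts $x$ and $8-x$, the variable part is $\binom{x}{2}+\binom{8-x}{2}$, which is strictly convex and uniquely minimized at $x=4$. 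The same holds in the other class.
\item Combined with the adjacency constraint, this is exactly the construction of Section~\ref{subsubsec:Construction}. So the construction is the unique minimizer for (c), and it simultaneously satisfies (a) and (b), completing the equivalence.
\end{itemize}
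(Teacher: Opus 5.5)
Your proof is correct and keeps the paper's skeleton. You verify (a) and (b) for the construction, get the ``only if'' direction from Theorem~\ref{thm:maxsumFCC}, and for (c) use the same three-source count of distance-$2$ pairs (intra-sphere pairs, boundary pairs across adjacent centres, equal-parity pairs across distance-$4$ centres), ending in the minimization of $\binom{x}{2}+\binom{8-x}{2}$.

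The real difference is in (c). The paper only varies the split $x$ inside an assignment that already has the bipartite form: it calls the boundary count ``constant for any bipartite construction'' and parametrizes each class by one complementary pair. So it never excludes a valid $d_{\min}=2$ code of some other shape. You close that gap using the fact that $G_3$ is $K_{8,8}$ minus the complement matching:
\begin{itemize}
\item each odd vertex forces its seven even neighbours onto the set $N(\p)$ of parities at distance $1$ from its own parity, and $N(\p)$ is a complementary pair;
\item any two odd vertices share six even neighbours, so their pairs must coincide.
\end{itemize}
The two-vertex check is genuinely needed, since one vertex cannot constrain its non-neighbour $\overline{\mathbf{v}}$. It works exactly as you sketch: $N(\p)$ is always $\{00,11\}$ or $\{01,10\}$, so a nonempty intersection forces equality.

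Hence $d_{\min}=2$ alone forces the group-assignment step. The construction is then the unique minimizer in (c) over all valid codes with $d_{\min}=2$, and (b) together with (c) already implies the construction without Theorem~\ref{thm:maxsumFCC}. You also justify facts the paper only asserts: same-class codewords are at distance exactly $4$, inter-sphere NHCW pairs at data distance $2$ arise only from distance-$4$ centres, and $m=12$.

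One sentence needs fixing. The claim that ``no valid SEFCC, even one violating (a) or (b), has fewer distance-$2$ pairs'' is false as far as (b) is concerned. Here is a counterexample.
\begin{itemize}
\item Assign only the parities $00$ and $01$, four of each within each weight class. No complementary pair is ever used, so the code is valid.
\item Let $A_1\le 32$ be the number of adjacent centre pairs with different parities. The code has $336+6A_1+12\cdot 24\le 816$ distance-$2$ pairs.
\item The construction has $336+336+288=960$.
\end{itemize}
The alternative code pays with distance-$1$ pairs instead. So restricting to $d_{\min}=2$ is not a convenience you may adopt. It is the only reading under which (c) is true, and you should state it that way.
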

 \begin{IEEEproof}
Conditions (a) and (b), namely maximum sum-distance $\Sigma_d(\mathcal{C})$ and $d_{\min}(\mathcal{C}) = 2$, follow directly from Theorems~\ref{thm:maxsumFCC} and \ref{thm:dmin2}, respectively. Since the bipartite construction restricts the partite sets $\mathcal{C}_o$ and $\mathcal{C}_e$ to disjoint parity sets $P_1 = \{00, 11\}$ and $P_2 = \{01, 10\}$, no pair $(\mathbf{c}_i, \mathbf{c}_k)$ with $\dH(\mathbf{c}_i, \mathbf{c}_k) = 3$ can be assigned the same parity. By Theorem~\ref{thm:dmin2}, this ensures $d_{\min}(\mathcal{C}) = 2$.

To prove (c), we minimize the number of pairs $(\Enc(\mathbf{u}_i), \Enc(\mathbf{u}_j))$ at Hamming distance $2$. A distance-2 pair in the FCC arises when the sum of message distance and parity distance equals 2:
    \(\dH(\mathbf{u}_i, \mathbf{u}_j) + \dH(\p_\mathcal{C}(\mathbf{u}_i), \p_\mathcal{C}(\mathbf{u}_j)) = 2.\)
These pairs originate from three sources:
\begin{enumerate}
    \item[(i)] \textbf{Intra-sphere pairs:} Two NHCWs $\mathbf{u}_{i,a}, \mathbf{u}_{i,b}$ within the same sphere $S_i$. Here $\dH(\mathbf{u}_{i,a}, \mathbf{u}_{i,b})=2$ and the parity distance is $0$ as both share $\overline{\p_\mathcal{C}(\mathbf{c}_i)}$. There are $16 \times \binom{7}{2}$ such pairs, invariant under any valid assignment.
    \item[(ii)] \textbf{Inter-sphere boundary pairs:} NHCWs $\mathbf{u}_{i,a} \in S_i$ and $\mathbf{u}_{k,b} \in S_k$ where $\dH(\mathbf{c}_i, \mathbf{c}_k)=3$. For these, $\dH(\mathbf{u}_{i,a}, \mathbf{u}_{k,b})=1$ and the parity distance $\dH(\overline{\p_\mathcal{C}(\mathbf{c}_i)}, \overline{\p_\mathcal{C}(\mathbf{c}_k)})=1$ (since the parities are from disjoint sets $P_1, P_2$). This count is a constant for any bipartite construction.
    \item[(iii)] \textbf{Identical-parity pairs:} NHCWs $\mathbf{u}_{i,a} \in S_i$ and $\mathbf{u}_{k,b} \in S_k$ where $\dH(\mathbf{c}_i, \mathbf{c}_k)=4$ and $\p_\mathcal{C}(\mathbf{c}_i) = \p_\mathcal{C}(\mathbf{c}_k)$. Here $\dH(\mathbf{u}_{i,a}, \mathbf{u}_{k,b})=2$ and the parity distance is $0$.
\end{enumerate}
To minimize the count in category (iii), we must minimize the number of HCW pairs at distance 4 that share a parity. In a partite set of 8 HCWs, every node is at distance 4 from other nodes in that partite set. If $x$ nodes in $\mathcal{C}_e$ are assigned $\p \in P_2$ and $8-x$ are assigned $\overline{\p}$, the number of same-parity pairs is $\binom{x}{2} + \binom{8-x}{2}$. This quadratic is minimized if and only if $x = 8-x = 4$. The same logic applies to $\mathcal{C}_o$ and $P_1$. Any deviation from this $4/4$ split increases the number of distance-2 pairs, thereby worsening the distance spectrum.
\end{IEEEproof}

It is well known that the probability of error of a code is governed by its distance spectrum: codes with larger minimum distance exhibit superior error performance, and among codes with identical minimum distance, those with fewer codeword pairs at the minimum distance perform better, with higher-distance terms contributing subsequently. 

For SEFCCs designed for the HCMF $f_H$, Theorem~\ref{thm:maxsum_dmin} establishes that maximizing the sum-distance is equivalent to simultaneously maximizing the minimum distance $d_{\min}$ and minimizing the number of codeword pairs at this $d_{\min}$. Consequently, SEFCCs that achieve the maximum sum-distance possess the most favorable distance spectrum permitted under the FCC constraints. This implies that, in the present setting, sum-distance maximization is sufficient to guarantee optimal data-bit error performance in the distance-spectrum sense, and is therefore not merely a heuristic—as is often the case in general code design—but an exact optimality criterion.

\section{Simulation Results}
\label{sec:Sim}
We evaluate the FER as well as BER of two distinct SEFCCs for the HCMF over an AWGN channel. The code $\mathcal{C}_1$ is obtained via the bipartite construction in Section \ref{subsec:CodeConst} and the code $\mathcal{C}_2$ is the one that results in optimal FER as given in Theorem \ref{thm:optimFER}. We assume that the coded bits are transmitted using BPSK modulation and that decoder performs soft-decision decoding. All simulations utilize the specific $[7,4,3]$-Hamming code defined in Example \ref{ex:HC_code}.


Matching theoretical guarantees, the simulated FER performance of $\mathcal{C}_2$ is superior to that of $\mathcal{C}_1$, whereas the simulated BER performance of $\mathcal{C}_1$ is superior to that of $\mathcal{C}_2$, since $\mathcal{C}_2$ attains optimal function protection despite having $d_{\min}=1$, while $\mathcal{C}_1$ achieves $d_{\min}=2$ and an optimized distance spectrum that provides stronger data-bit protection.


\begin{figure}[h]
    \centering
    \includegraphics[width=0.98\columnwidth]{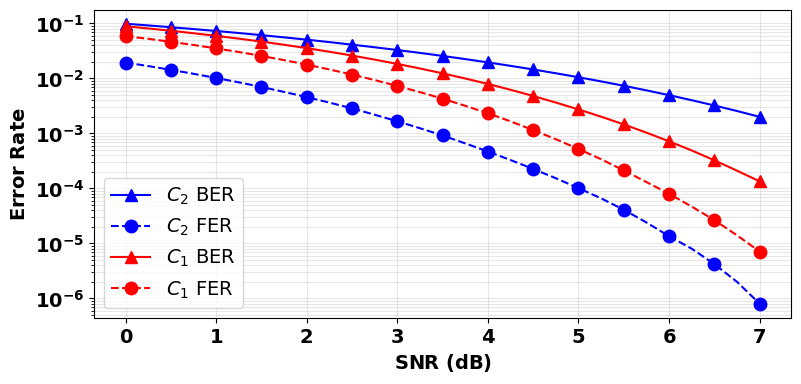}
    \caption{BER and FER comparison of $\mathcal{C}_1$ and $\mathcal{C}_2$.\label{fig:berfer}}
\end{figure}




\section{Conclusion}
\label{sec:Conc}
We studied SEFCC constructions for the $[7,4,3]$ HCMF and provided a complete characterization of valid parity assignments via a distance-$3$ constraint. Leveraging the induced bipartite geometry of the Hamming code, we identified a distance-maximizing parity-balance condition and derived a full class of constructions that simultaneously maximize sum-distance, achieve the largest possible minimum distance, and minimize the number of minimum-distance codeword pairs. These results demonstrate that, for the HCMF SEFCC problem, sum-distance maximization is not merely heuristic but exactly enforces the distance-spectrum properties that govern data-bit error performance. While the analysis is presented for the $[7,4,3]$ case, the underlying geometric and distance-spectrum principles suggest natural extensions to SEFCC design for general Hamming codes.


\end{document}